\theoremstyle{definition}
\newtheorem{theorem}{\textbf{Theorem}}
\newtheorem{proposition}{\textbf{Proposition}}
\newtheorem{lemma}{\textbf{Lemma}}
\newtheorem{remark}{\textbf{Remark}}
\newtheorem{definition}{\textbf{Definition}}
\newcommand\reallywidehat[1]{%
	\savestack{\tmpbox}{\stretchto{%
			\scaleto{%
				\scalerel*[\widthof{\ensuremath{#1}}]{\kern.1pt\mathchar"0362\kern.1pt}%
				{\rule{0ex}{\textheight}}
			}{\textheight}%
		}{2.4ex}}%
	\stackon[-6.9pt]{#1}{\tmpbox}%
}
\title{Enhanced Robust Tracking Control: An Online Learning Approach}
\author{Ao Jin, Weijian Zhao, Yifeng Ma, Panfeng Huang, and Fan Zhang$^*$ 
\thanks{Ao Jin, Weijian Zhao, Yifeng Ma, Panfeng Huang, and Fan Zhang are with Research Center for Intelligent Robotics, Shaanxi Province Innovation Team of Intelligent Robotic Technology, School of Astronautics, Northwestern Polytechnical University, Xi’an 710072, China (E-mail: jinao@mail.nwpu.edu.cn, fzhang@nwpu.edu.cn)}%
}
\begin{document}

\maketitle
\thispagestyle{empty}
\pagestyle{empty}

\begin{abstract}
This work focuses the tracking control problem for nonlinear systems subjected to unknown external disturbances. Inspired by contraction theory, a neural network-dirven CCM synthesis is adopted to obtain a feedback controller that could track any feasible trajectory. Based on the observation that the system states under continuous control input inherently contain embedded information about unknown external disturbances, we propose an online learning scheme that captures the disturbances dyanmics from online historical data and embeds the compensation within the CCM controller. The proposed scheme operates as a plug-and-play module that intrinsically enhances the tracking performance of CCM synthesis. The numerical simulations on tethered space robot and PVTOL demonstrate the effectiveness of proposed scheme. The source code of the proposed online learning scheme can be found at {\url{https://github.com/NPU-RCIR/Online_CCM.git}}.
\end{abstract}

\section{Introduction}
In robotic applications, autonomous systems often necessitate interactions with unstructured environments containing unknown or time-varying dynamics, while being susceptible to unknown external disturbances such as unmodeled forces, sensor noise, and payload variations. For the tracking control problem, these unknown external disturbances can significantly degrade the performance of the closed-loop system. Moreover, the presence of environmental uncertainties typically leads a conservative motion planning to ensure safety constraint satisfaction. This fundamental trade-off between robustness and performance underscores the critical need for control architectures capable of online disturbance estimation and compensation. 

The control contraction metric (CCM) synthesis provides a systematic framework of designing the tracking controller for nonlinear systems with performance guarantees \cite{Manchester2017}. The CCM synthesis is based on the contraction theory \cite{Lohmiller1998,Manchester2014}, which studies the exponential convergence of any two neighboring trajectories of nonlinear system. The CCM synthesis can be formulated as a convex optimization problem, which can be solved by sum-of-squares (SOS) programming \cite{Wei2021,Singh2023}. Recently, many learning-based schemes have been proposed for finding a valid CCM and the associated tracking controller \cite{Sun2021,Tsukamoto2021,Rezazadeh2022,Richards2023,Song2024}, which provides a new perspective for the CCM synthesis.

\textbf{\textit{Main challenges and motivations}}: The CCM synthesis admits a feedback controller that could track any feasible trajectory with guaranteed performance. The SOS programming has been successfully applied for searching the valid metric. However, the applicability of this methodology to general robotic systems remains constrained by the prerequisite that the dynamics needs to be polynomial equations or can be approximated by polynomial equations. In addition, for the nonlinear control-affine systems subjected to external disturbances, the tracking performance exhibits a direct correlation with the boundness of external disturbances. A robust CCM (RCCM) \cite{Zhao2022} has been proposed to minimize the $\mathcal{L}_{\infty}$ gain from disturbances to the tracking error. But the controller proposed in \cite{Zhao2022} needs to solve a nonlinear programming (NLP) problem online, which is computationally expensive. The $\mathcal{L}_1$-adaptive control is cooperated with CCM synthesis \cite{Lakshmanan2020} to improve the performance with state and time-varing uncertainties. However, its tracking performance relys on the design of the adaptive controller. The reinforcement learning (RL) \cite{Wang2023i} is introduced to the CCM synthesis for estimating the external disturbances, which has expensive computation cost for estimating disturbances.

\begin{figure}[!t]
	\centering
	\includegraphics[width=16pc]{./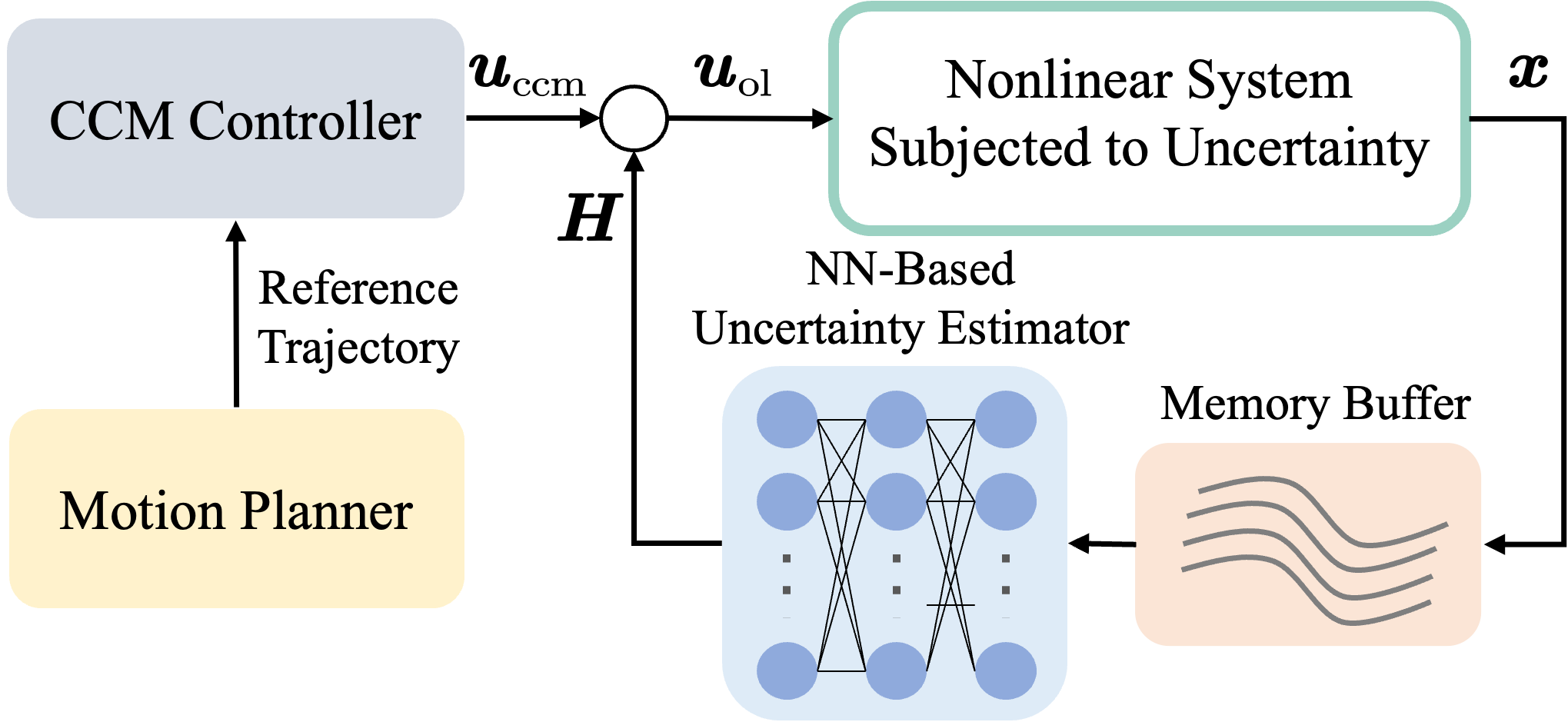} 
	\caption{Illustration of proposed online learning scheme for enhancing the robust tracking control.} \label{framework}
\end{figure}

Therefore, in this work, we propose an online learning scheme to enhance the trakcing performance of the CCM synthesis. First, the valid metric and tracking controller are jointly learned by the neural network-driven CCM synthesis \cite{Sun2021}, which eliminates the online computation of the geodesic. Then, by utilizing the online historical data, an online learning scheme is proposed for capturing the unknown external disturbances. Specifically, a virtual dynamics which is augmented from the nominal dynamics, is established for the perturbed system. A neural network is integrated into the virtual dynamics to estimate the external disturbances. The parameters of the neural network are updated online by minimizing the discrepancy between the trajectories of virtual dynamics and perturbed system in a moving horizon manner. The estimated disturbances are compensated into the tracking controller directly, which can improve the tracking performance of the closed-loop system and refine the "tube" derived from CCM synthesis. The proposed online learning scheme operates as a plug-and-play module that intrinsically enhances the performance of CCM synthesis without necessitating supplementary controller. The proposed framework is illustrated in Fig \ref{framework}.

The main contributions of this work are summarized as follows: 1) An online learning scheme is proposed to capture the unknown external disturbances from online historical data. 2) The compensation of external disturbances enhances the tracking performance of the neural network-driven CCM synthesis. 3) The "tube" derived from CCM synthesis can be refined online for a reduced conservatism motion planning.

\textbf{\textit{Notation}}: Given a matrix $\bm M$, define $\hat{\bm M}=\bm M+\bm M^\mathrm{T}$. The $\mathcal{T}_x\mathcal{X}$ represents the tangent space of $\mathcal{X}$ at $\bm x\in \mathcal{X}$. The maximum and minimum singular values of matrix $\bm M$ are represented by $\overline{\sigma}(\bm M)$ and $\underline{\sigma}(\bm M)$, respectively. The $\Vert \bm x \Vert_M=\sqrt{x^\mathrm{T}\bm  My}$ denotes the weighted form for matrix $\bm M\in S_n^{>0}$.

\section{Problem Formulation}
In this work, we consider a nonlinear system subjected to unknown external disturbances in a control-affine form
\begin{equation}
	\dot{\bm{x}} = \bm f(\bm x) + \bm B(\bm x)(\bm u + \bm h(t, \bm x)) \label{eq_system}
\end{equation}
where $\bm{x} \in \mathcal{X} \subset \mathbb{R}^n$ is the state, $\bm u \in\mathcal{U} \subset \mathbb R^m$ is the control input, $\mathcal{X}$ and $\mathcal{U}$ denote the state and control constraint set. The functions $\bm f:\ \mathbb{R}^n \rightarrow \mathbb{R}^n$ and $\bm{B}:\ \mathbb{R}^n\rightarrow\mathbb{R}^{n\times m}$ are known and Lipschitz continuous. The function $\bm h(t, \bm x)$ represents the unknown external disturbances with the known upper bound $\overline{H}$, and it is Lipschitz continuous. The unperturbed (nominal) dynamics of system (\ref{eq_system}) can be formulated as
\begin{equation}
    \dot{\bm{x}} = \bm{f}(\bm{x})+\bm{B}(\bm{x})\bm{u} \label{eq_nominal_system}
\end{equation}

Consider a motion plan $(\bm{x}^*(t),\bm{u}^*(t))$ generated from any planner based on the unperturbed dynamics (\ref{eq_nominal_system})
\begin{equation}
    \dot{\bm{x}}^*(t) = \bm{f}(\bm{x}^*(t)) + \bm{B}(\bm{x}^*(t))\bm{u}^*(t)
\end{equation}
The objective under consideration is to design a control law $\bm{u}$ for nonlinear system (\ref{eq_system}) to track the motion plan $(\bm{x}^*(t),\bm{u}^*(t))$ as closely as possible, while keeping the state $\bm{x}(t)$ in a compact safe set $\mathcal{X}_{\text{safe}}\subset \mathcal{X}$, for all $t>0$.

\section{Preliminaries on Control Contraction Metrics}
The basic idea of control contraction metric (CCM) synthesis is to analyze the incremental exponential stability (IES) of a system by investigating the evolution of infinitesimal distance between any two neighboring trajectories. 

For the unperturbed system (\ref{eq_nominal_system}), we consider a pair of neighboring trajectories and denote the infinitesimal displacement between them by $\bm{\delta_x}$. The dynamics of this infinitesimal displacement is given by 
\begin{equation}
	\dot{\bm{\delta}_x}=\bm A(\bm{x},\bm{u})\bm{\delta}_x+ \bm{B}(\bm{x})\bm{\delta}_u
\end{equation}
where $\bm A=\frac{\partial \bm{f}}{\partial \bm{x}}+\sum_{i=1}^{m}\frac{\partial \bm b_i}{\partial \bm x}\bm u_i$, $\bm b_i$ is the $i$ th column of $\bm{B}(\bm{x})$ and $\bm u_i$ is $i$ th component of $\bm{u}$. Define the Riemann squared length of $\bm{\delta_x}$ by \cite{Manchester2017}
\begin{equation}
    \bm{V}(\bm{x},\bm{\delta_x}) = {\bm{\delta}_x^\mathrm{T}}\bm{M}(\bm{\bm{x}})\bm{\delta_x}
\end{equation}
where $\bm{M}(\bm{x})$ is a smooth function which maps from $\mathcal{X}$ to the set of uniformly positive definite symmetric matrices. As shown in \cite{Singh2023}, the CCM is defined as follows
\begin{definition}\label{CCM}
	If there exists a uniformly bounded metric $\bm M(\bm{x})$, and a differential controller $ \bm{\delta}_u: \mathcal{T}_x\mathcal{X}\rightarrow\mathcal{T}_u\mathcal{U}$,  such that $\dot{\bm{V}}(\bm{x}, \bm{\delta}_x)\prec 0, \forall  \bm{x}\in\mathcal{X}, \forall \bm{\delta}_x\in\mathcal{T}_x\mathcal{X} $. Then, the metric $\bm M(\bm{x})$ is referred as a CCM of the system. Additionally, if $\dot{\bm{V}}(\bm{x}, \bm{\delta}_x)\prec -2\lambda \bm{V}(\bm{x}, \bm{\delta}_x), \forall  \bm{x}\in\mathcal{X}$ holds for a positive constant $\lambda$, then the system is contracting at rate $\lambda$.
\end{definition}

If a feedback controller $\bm{u}=\bm{k}(\bm{x})+\bm{v}$ that makes the closed-loop system contracting at rate $\lambda$ and a given metric $\bm{M}(\bm{x})$ for any continuous signal $\bm{v}$, then the following inequality holds \cite{Lohmiller1998}
\begin{equation}
	\dot{\bm{M}} + \reallywidehat{\bm{M}\left(\bm A+\bm B\bm K \right)}+2\lambda \bm M \prec 0 \label{condition_ccm}
\end{equation}
where $\dot{\bm{M}}=\sum_{i=1}^{n}\frac{\partial \bm M}{\partial x_i}\dot{\bm{x}},\bm K=\frac{\partial \bm k}{\partial \bm{x}}$. The inequality (\ref{condition_ccm}) gives a sufficient condition for searching a valid CCM and a feedback controller that makes the system contracting at rate $\lambda$. Moreover, when the feedback controller is designed as $\bm{u} = \bm{k}(\bm{x},\bm{x}^*)+\bm{u}^*$ satisfying $\bm{u}=\bm{u}^*$ when $\bm{x}=\bm{x}^*$, the unperturbed closed-loop system owns the following property. 
\begin{proposition} \label{proposition_unperturbed_track}
	If the condition (\ref{condition_ccm}) is satisfied with a uniformly bounded metric $\bm M(\bm{x})$, i.e., $\underline{\alpha}\bm{I}\preceq \bm M(\bm{x})\preceq \overline{\alpha}\bm{I} $ and a feedback tracking controller $\bm{u}=\bm k(\bm{x},\bm{x}^*)+\bm{u}^*$, then the displacement between actual trajectory and nominal trajectory of unperturbed close-loop system exponentially converges to zero, i.e.,
	\begin{equation}
		\Vert  \bm{x}(t)- \bm{x}^*(t)\Vert \le R e^{-\lambda t}\Vert  \bm{x}(0)- \bm{x}^*(0)\Vert \label{converge}
	\end{equation}
	where $\lambda$ is the contracting rate and the constant $R=\sqrt{{\overline{\alpha}}/{\underline{\alpha}}}$ is the overshoot constant. 
\end{proposition}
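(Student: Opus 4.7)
The plan is to upgrade the infinitesimal contraction condition (\ref{condition_ccm}) to a finite-distance bound via a geodesic argument in the Riemannian metric induced by $\bm M(\bm x)$, and then translate the Riemannian bound back to the Euclidean norm using the uniform bounds $\underline{\alpha}\bm I \preceq \bm M(\bm x)\preceq \overline{\alpha}\bm I$.

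First, I would close the loop with $\bm u=\bm k(\bm x,\bm x^{*})+\bm u^{*}$ and apply Definition \ref{CCM} to the closed-loop dynamics. Condition (\ref{condition_ccm}) implies that along the closed-loop flow and for every tangent vector $\bm{\delta}_x\in\mathcal{T}_x\mathcal{X}$, the Riemann squared length $\bm V(\bm x,\bm{\delta}_x)=\bm{\delta}_x^{\mathrm T}\bm M(\bm x)\bm{\delta}_x$ obeys $\dot{\bm V}\preceq -2\lambda\bm V$. Next, for each time $t$ I parameterize a minimizing geodesic $\gamma_t:[0,1]\to\mathcal{X}$ joining $\bm x^{*}(t)=\gamma_t(0)$ to $\bm x(t)=\gamma_t(1)$ under the metric $\bm M$, and I define the Riemannian energy
\begin{equation}
    E(t)\;=\;\int_{0}^{1}\gamma_t'(s)^{\mathrm T}\bm M(\gamma_t(s))\gamma_t'(s)\,\mathrm ds .
\end{equation}
Because $\gamma_t'(s)$ plays the role of an infinitesimal displacement between the two neighboring closed-loop trajectories generated by sweeping the boundary condition along $\gamma_t$, and because $\bm u=\bm u^{*}$ whenever $\bm x=\bm x^{*}$ (so both endpoints evolve as closed-loop trajectories), the pointwise inequality $\dot{\bm V}\preceq -2\lambda\bm V$ integrates to $\dot E(t)\preceq -2\lambda E(t)$, yielding $E(t)\le e^{-2\lambda t}E(0)$ by Gr\"onwall.

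To convert this to a Euclidean statement, I would use the metric bounds. The Riemannian distance $d(\bm x,\bm x^{*})=\sqrt{E}$ (taken with the minimizing geodesic) satisfies $\sqrt{\underline{\alpha}}\,\Vert\bm x-\bm x^{*}\Vert\le d(\bm x,\bm x^{*})\le\sqrt{\overline{\alpha}}\,\Vert\bm x-\bm x^{*}\Vert$; the upper bound follows by taking the straight-line path between $\bm x^{*}$ and $\bm x$ as a competitor in the length infimum, and the lower bound follows from $\bm M\succeq\underline{\alpha}\bm I$ applied inside the length integral. Chaining these with the energy bound gives
\begin{equation}
    \Vert\bm x(t)-\bm x^{*}(t)\Vert\;\le\;\tfrac{1}{\sqrt{\underline{\alpha}}}\,e^{-\lambda t}\sqrt{E(0)}\;\le\;\sqrt{\tfrac{\overline{\alpha}}{\underline{\alpha}}}\,e^{-\lambda t}\Vert\bm x(0)-\bm x^{*}(0)\Vert,
\end{equation}
which is exactly (\ref{converge}) with $R=\sqrt{\overline{\alpha}/\underline{\alpha}}$.

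The main obstacle I anticipate is the geodesic step: justifying that a smooth minimizing geodesic between $\bm x^{*}(t)$ and $\bm x(t)$ exists, varies smoothly in $t$, and can be differentiated through the energy integral so that the pointwise contraction of $\bm V$ along each tangent field transfers to $\dot E\le -2\lambda E$. Standard treatments handle this by invoking uniform positive-definiteness and smoothness of $\bm M$ on a geodesically convex neighborhood, plus a first-variation-of-energy argument that discards boundary terms because both endpoints are themselves closed-loop trajectories (the differential controller $\bm{\delta}_u$ is applied consistently along the family). The rest is routine bookkeeping with the metric bounds.
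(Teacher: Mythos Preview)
The paper does not actually supply a proof of Proposition~\ref{proposition_unperturbed_track}; it is stated as a known consequence of the CCM framework and implicitly deferred to the cited contraction-theory literature (Lohmiller--Slotine, Manchester--Slotine). So there is no in-paper argument to compare against.

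Your outline is the standard proof of this result from that literature and is essentially correct: pass from the pointwise differential contraction $\dot{\bm V}\le -2\lambda\bm V$ to decay of the Riemannian energy along the minimizing geodesic joining $\bm x^{*}(t)$ and $\bm x(t)$, apply Gr\"onwall, and sandwich the Riemannian distance by the Euclidean one via $\underline{\alpha}\bm I\preceq\bm M\preceq\overline{\alpha}\bm I$ to obtain the overshoot factor $R=\sqrt{\overline{\alpha}/\underline{\alpha}}$. The obstacle you flag---existence and differentiability of the minimizing geodesic so that the first variation of energy drops the boundary terms---is the only genuinely delicate point, and it is handled in the references exactly as you describe (uniform positive-definiteness of $\bm M$ on a geodesically convex set, plus the fact that both endpoints are closed-loop trajectories because $\bm u=\bm u^{*}$ when $\bm x=\bm x^{*}$). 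One small sharpening: rather than differentiating the \emph{minimizing} geodesic in $t$, it is cleaner to fix the geodesic at time $t$, propagate every point on it forward under the closed-loop flow to obtain a (non-geodesic) curve at time $t+\mathrm dt$ whose energy has decayed by $e^{-2\lambda\,\mathrm dt}$, and then note that re-minimizing over curves can only decrease the energy further; this sidesteps any smoothness-in-$t$ assumption on the geodesic itself.
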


If the closed-loop system is subjected to the external disturbances, the CCM synthesis also gives the performance guarantees.
\begin{lemma} \cite{Singh2023} \label{lemma_rci}
	If there is a uniformly bounded metric $\bm M(\bm{x})$, i.e., $\underline{\alpha}\bm{I}\preceq \bm M(\bm{x})\preceq \overline{\alpha}\bm{I} $ and a feedback controller $\bm{u}=\bm k(\bm{x},\bm{x}^*)+ \bm{u}^*$ satisfying the condition (\ref{condition_ccm}), then, for the same reference trajectory, the distance between actual trajectory of perturbed closed-loop dynamics and nominal trajectory of unperturbed dynamics is bounded. And 
	\begin{equation}
		\Omega(\bm{x}^*):= \left\{ \bm{x}\in \mathcal{X}:\left\| \bm{x}- \bm{x}^* \right\| _{\underline{M}}^2\le \overline{c}^2 \right\} \label{RCI_set}
	\end{equation}
	is an ellipsoid robust control invariant (RCI) set of closed-loop system stabilized with $\bm u=\bm k(\bm x, \bm x^*)+\bm u^*$ , where $\overline{c}=\sup_{ \bm{x}\in\mathcal{X}}\overline{\sigma}(\Theta(\bm{x})\bm{B}(\bm{x}))\overline{H}/\lambda, \bm M(\bm x)=\Theta(\bm{x})^T\Theta(\bm{x})$, $\bm M(\bm x)\ge \underline{M}$ holds for all $\bm x\in \mathcal{X}$, and $\overline{H}$ is the upper bound of external disturbances. 
\end{lemma}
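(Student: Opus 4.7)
The plan is to track the Riemannian energy of the minimizing geodesic (in the metric $\bm M$) that connects the nominal trajectory $\bm x^*(t)$ to the actual perturbed trajectory $\bm x(t)$, show it is ultimately bounded by $\overline c^{\,2}$, and then translate that bound into the weighted Euclidean statement defining $\Omega(\bm x^*)$. Concretely, I would introduce the minimizing geodesic $\gamma(s,t):[0,1]\to\mathcal{X}$ with $\gamma(0,t)=\bm x^*(t)$, $\gamma(1,t)=\bm x(t)$, and define the Riemannian energy $E(t)=\int_0^1 \gamma_s^{\mathrm{T}}\bm M(\gamma)\gamma_s\,ds$. Since $\gamma$ is a minimizer, the integrand is constant in $s$ and equals $E(t)$; in particular $\gamma_s(1,t)^{\mathrm{T}}\bm M(\bm x)\gamma_s(1,t)=E(t)$, a fact I will use to move from an energy-level bound to an endpoint-speed bound.

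Next I would compute $\dot E$ via the first variation of energy and split it into a nominal piece and a disturbance piece. The nominal closed-loop dynamics follows $\dot{\bm x}=\bm f(\bm x)+\bm B(\bm x)(\bm k(\bm x,\bm x^*)+\bm u^*)$, for which the CCM condition (\ref{condition_ccm}) guarantees contractivity at rate $\lambda$ pointwise along $\gamma$, yielding a $-2\lambda E$ contribution exactly as in the unperturbed Proposition \ref{proposition_unperturbed_track}. The disturbance only affects the endpoint $\bm x$ through $\bm B(\bm x)\bm h(t,\bm x)$, contributing the boundary term $2\gamma_s(1,t)^{\mathrm{T}}\bm M(\bm x)\bm B(\bm x)\bm h(t,\bm x)$. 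With $\bm M=\Theta^{\mathrm{T}}\Theta$, Cauchy--Schwarz gives $|\gamma_s(1,t)^{\mathrm{T}}\bm M(\bm x)\bm B(\bm x)\bm h|\le \|\Theta(\bm x)\gamma_s(1,t)\|\,\overline\sigma(\Theta(\bm x)\bm B(\bm x))\,\|\bm h\|\le\sqrt{E}\,\overline\sigma(\Theta(\bm x)\bm B(\bm x))\,\overline H$, so $\dot E\le -2\lambda E+2\sqrt{E}\,\sup_{\bm x}\overline\sigma(\Theta(\bm x)\bm B(\bm x))\,\overline H$.

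Setting $W=\sqrt{E}$ converts this into the scalar differential inequality $\dot W\le -\lambda W+\sup_{\bm x}\overline\sigma(\Theta(\bm x)\bm B(\bm x))\,\overline H$. The comparison lemma then gives $W(t)\le e^{-\lambda t}W(0)+\overline c\,(1-e^{-\lambda t})$, so whenever $W(0)\le\overline c$ we have $W(t)\le\overline c$ for all $t\ge 0$, establishing forward invariance in the energy sense. To finish, I would use $\bm M(\bm x)\succeq\underline M$ for all $\bm x\in\mathcal{X}$: any curve has length in metric $\bm M$ at least its length in the constant metric $\underline M$, so $\sqrt{E}$ (the geodesic length) dominates $\|\bm x-\bm x^*\|_{\underline M}$, whence $\|\bm x-\bm x^*\|_{\underline M}^{\,2}\le E\le\overline c^{\,2}$. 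This is exactly the defining inequality of $\Omega(\bm x^*)$ in (\ref{RCI_set}).

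The step I expect to be most delicate is the derivation of $\dot E\le -2\lambda E+(\text{boundary term})$: one must apply the first variation formula carefully, observe that interior variations vanish at the minimizing geodesic, and then invoke (\ref{condition_ccm}) pointwise along $\gamma$ (using $\bm\delta_x=\gamma_s$ and the associated differential feedback) to convert the pointwise contraction rate into the integrated bound $-2\lambda E$. The remaining ingredients (Cauchy--Schwarz, the scalar comparison lemma, and the metric inequality) are essentially mechanical.
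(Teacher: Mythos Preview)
The paper does not supply its own proof of this lemma; it is quoted directly from \cite{Singh2023}. Your plan is precisely the standard argument from that reference (and from \cite{Manchester2017}): track the Riemannian energy $E$ along the minimizing geodesic between $\bm x^*$ and $\bm x$, use condition~(\ref{condition_ccm}) pointwise along $\gamma$ to produce the $-2\lambda E$ term, bound the disturbance boundary contribution at $s=1$ via the factorization $\bm M=\Theta^{\mathrm T}\Theta$ and Cauchy--Schwarz, reduce to the scalar inequality $\dot W\le -\lambda W+\sup_{\bm x}\overline\sigma(\Theta\bm B)\overline H$ for $W=\sqrt E$, and finally invoke $\bm M\succeq\underline M$ to convert the geodesic-length bound into the $\|\cdot\|_{\underline M}$ inequality defining $\Omega(\bm x^*)$. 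The step you flag as delicate---passing from the pointwise contraction condition to the integrated $-2\lambda E$ bound via the first variation of energy---is exactly where the cited references do the work, and your outline handles it correctly.
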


\section{Main Results}
\subsection{Neural Network-Driven CCM Synthesis}
As shown in Lemma (\ref{lemma_rci}), the CCM synthesis can be extended to the tracking control problem with performance guarantees. In this work, we propose a neural network-driven CCM synthesis to search for the valid metric and feedback controller. The metric is parameterized by a neural network $\bm{M}(\bm{x};\theta_{\bm{M}})$ with parameters $\theta_{\bm{M}}$, and the feedback gain is also parameterized by a neural network $\bm k(\bm x,\bm x^*;\theta_{\bm{k}})$ with parameters $\theta_{\bm{k}}$. The controller $\bm u(\bm x,\bm x^*,\bm u^*;\theta_{\bm{k}})$ is given by $\bm u(\bm x,\bm x^*,\bm u^*;\theta_{\bm{k}}) = \bm u^*+\bm k(\bm x,\bm x^*;\theta_{\bm{k}})$, which satisfies that $\bm{u}(\bm{x},\bm{x}^*,\bm{u}^*;\theta_{\bm{k}})=\bm{u}^*$ when $\bm{x}=\bm{x}^*$ for all $\theta_{\bm{k}}$. The risk function for searching the valid metric and feedback controller is defined as
\begin{equation}
	L_{\text{ccm}} = \mathop {\mathbb{E}} \limits_{\left( \bm x,\bm x^*,\bm u^* \right) \sim \rho \left( \mathcal{D} \right)} \mathcal{G}(-\psi(\bm x,\bm x^*,\bm u^*;\theta_{\bm{M}},\theta_{\bm{u}})-\sigma) \label{loss_ccm}
\end{equation}
where $\psi(\bm x,\bm x^*,\bm u^*;\theta_{\bm{M}},\theta_{\bm{u}})$ represents the LHS of (\ref{condition_ccm}), $\rho(\mathcal{D})$ denotes the uniform distribution over the sampling data space $\mathcal{D}:=\mathcal{X}\times\mathcal{X}\times\mathcal{U}$, $\sigma$ is a small positive constant,  and the auxiliary function $ \mathcal{G}$ is constrained for positive semi-definite, i.e., $ \mathcal{G}(\bm A)=0$ if and only if $\bm A\succeq 0$. 

However, the searching space for finding the valid metric and feedback controller by minimizing (\ref{loss_ccm}) is very large, which would lead poor performance of the neural network-driven CCM synthesis. To address this issue, we impose a tighter condition for searching the valid metric and feedback controller. A stronger condition for searching valid metric and feedback controller is presented in \cite{Manchester2017}, which can be expressed as
\begin{gather}
	\bm B_{\bot}^\mathrm{T}[-\partial_{\bm{f}}\bm W+\reallywidehat{\frac{\partial \bm{f}}{\partial \bm{x}}\bm W}+2\lambda \bm W ]\bm B_{\bot} \prec 0 \label{c1} \\
	\bm B_{\bot}^\mathrm{T}[\partial_{\bm b_i}\bm W-\reallywidehat{\frac{\partial \bm b_i}{\partial \bm{x}}\bm W}]\bm B_{\bot} = 0 ,i=1,...,m \label{c2}
\end{gather}
where $\bm{W}(\bm{x}) = \bm{M}(\bm{x})^{-1}$ is the dual metric, and $B_{\bot}$ satisfies $B_{\bot}^\mathrm{T} B=0$. By imposing constraints (\ref{c1}) and (\ref{c2}), the risk functions are defiend by
\begin{gather}
	L_{C_1} = \mathop {\mathbb{E}} \limits_{\left( \bm x,\bm x^*,\bm u^* \right) \sim \rho \left( \mathcal{D} \right)}\mathcal{G}(-\bm C_1(\bm x;\theta_M)) \\
	L_{C_2} = \mathop {\mathbb{E}} \limits_{\left(\bm x,\bm x^*,\bm u^* \right) \sim \rho \left( \mathcal{D} \right)} \Vert \bm{C}_2\Vert_F
\end{gather}
where $\bm C_1(\bm x;\theta_M)$ represents the LHS of (\ref{c1}), $\bm{C}_2:=(C_2^1,...,C_2^i,...,C_2^m)$, $C_2^i$ denotes the LHS of (\ref{c2}). 

In addition, it should impose constraint to guarantee that the metric is uniformly bounded. Construct the metric as $\bm M(\bm x;\theta_{\bm{M}}) = \underline{\alpha}\bm{I}+ \bm m(\bm x;\theta_{\bm{M}})^\mathrm{T} \bm m(\bm x;\theta_{\bm{M}})$ where $\bm{m}(\bm{x};\theta_{\bm{M}})$ is a neural network with parameters $\theta_{\bm{M}}$. This representation ensures that the smallest condition number of metric is lower bounded by $\underline{\alpha}$ for all $\bm{x}\in\mathcal{X}$. The largest condition number risk function is raised by
\begin{equation}
	L_{\bm M} = \mathop \mathbb{E} \limits_{\left( \bm x,\bm x^*,\bm u^* \right) \sim \rho \left( \mathcal{D} \right)}\mathcal{G}(\overline{\alpha}\bm{I}-\bm M(\bm x;\theta_M))
\end{equation}

The empirical risk function for searching the valid metric and feedback controller is given by
\begin{equation}
	\begin{aligned}
		\mathcal{L}(\theta_M,\theta_u,\lambda,\overline{\alpha},\underline{\alpha})&=\frac{1}{N}\sum_{i=1}^{N}[L_{\text{ccm}}(\bm x_i,\bm x_i^*,\bm u_i^*)+L_{C_1}(\bm x_i)\\
		&+L_{C_2}(\bm x_i)+L_M(\bm{x}_i)] \label{loss_function}
	\end{aligned} 
\end{equation}
where the tuple $(\bm x_i,\bm x_i^*,\bm u_i^*)_{i=1}^N$ contains $N$ samples distributed uniformly over the data space $\mathcal{D}$.

With the learned metric and feedback controller, one can derive the tightened state and control constraints for the motion planner
\begin{equation}
	\begin{aligned}
		&\bm x^*\in \overline{\mathcal{X}}=\mathcal{X}\ominus\Omega, \bm u^*\in \overline{\mathcal{U}} =\{\bm{\overline{u}}\in\mathcal{U}|\overline{\bm u}+\bm k(\bm x,\bm x^*)\in\mathcal{U},\\ &\bm x\in \Omega(\bm x^*),\forall \bm x \in \mathcal{X},\forall \bm x^* \in \mathcal{\overline{X}}\}
	\end{aligned}
\end{equation}
where $\Omega(\bm x^*)$ is the RCI set defined in (\ref{RCI_set}).

\begin{remark}
	As shown in \cite{Manchester2017,Singh2021}, if $\bm B(\bm x)$ has the sparse representation as $\bm{B}(\bm{x})=[\bm{O}_{(n-m)\times m}\ \bm{b}(\bm{x})]^{\mathrm{T}}$, where the matrix $\bm b(\bm x)\in\mathbb{R}^{m\times m}$ is invertible, the condition (\ref{c2}) would be satisfied automatically if the upper-left $(n-m)\times(n-m)$ block of the dual metric $\bm W(\bm x)$ is not a function of the last $m$ components of $\bm{x}$.
\end{remark}


\subsection{Enhanced Robust Tracking Control via Online Learning}
As shown in Lemma \ref{lemma_rci}, once the search for the valid metric and related feedback controller is done, the tracking performance of perturbed closed-loop system is directly correlated with the upper bounds of external disturbances. Generally, the upper bounds of external disturbances are often conservatively given, resulting in conservative trajectory tracking performance of the CCM synthesis. Theoretically, the system states under continuous control input inherently contain embedded information about unknown external disturbances, which can also be derived from (\ref{eq_system}). This observation can be exploited for capturing the unknown external disturbances in a data-driven manner. Therefore, we propose an online learning approach to capture the unknown external disturbances from online historical data for enhancing the performance of CCM synthesis. The core concept is stated as follows.

For the perturbed system (\ref{eq_system}), consider a virtual dynamics 
\begin{equation}
    \dot{\bm{\xi}} = \bm{f}(\bm{\xi}) + \bm{B}(\bm{\xi})(\bm{u} + \bm{H}(t,\bm{\xi}) )\label{eq_virtual_system}
\end{equation}
where $\bm{\xi}\in \mathbb{R}^n$ is the virtual state, $\bm{u}$ is the control input of (\ref{eq_system}), $\bm{H}(t,\bm{\xi})$ denotes the expected external disturbances. Given an initial state $\bm{\xi}(t_0)$ and a control law $\bm{u}(t)$, one can get a virtual trajectory
\begin{equation}
    \bm{\xi}(t) = \bm{\xi}(t_0) + \int_{t_0}^{t}(\bm{f}(\bm{\xi}) + \bm{B}(\bm{\xi})(\bm{u} + \bm{H}(\tau,\bm{\xi}) ))d\tau \label{virtual_trajectory}
\end{equation}
And for the perturbed system (\ref{eq_system}), with the same control law $\bm{u}(t)$, the actual trajectory can be obtained by
\begin{equation}
    \bm{x}(t) = \bm{x}(t_0) + \int_{t_0}^{t}(\bm{f}(\bm{x}) + \bm{B}(\bm{x})(\bm{u} + \bm{h}(\tau, \bm{x}) ))d\tau \label{actual_trajectory}
\end{equation}
where $\bm{x}(t_0)$ is the initial state of (\ref{eq_system}), and $\bm{h}(\tau, \bm{x})$ is the actual external disturbances. 

\begin{theorem}
	If $\bm{\xi}(t_0)=\bm{x}(t_0)$ and the virtual trajectory $\bm{\xi}(t)$ is identical to the actual trajectory $\bm{x}(t)$ for any $[t_0,t]$, then the virtual dynamics is equivalent to the actual system and $\bm{H}(t,\bm{\xi})$ can be regarded as the true observation of unknown external disturbances.
\end{theorem}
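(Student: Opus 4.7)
The plan is to compare the two integral representations (\ref{virtual_trajectory}) and (\ref{actual_trajectory}) under the hypothesis that $\bm{\xi}(t)\equiv\bm{x}(t)$ on the interval in question, then peel off the common drift and input terms so that only the disturbance terms remain, and finally recover $\bm{H}=\bm{h}$ by differentiating and invoking an injectivity property of $\bm{B}(\bm{x})$.

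Concretely, first I would note that $\bm{\xi}(t_0)=\bm{x}(t_0)$ together with $\bm{\xi}(\tau)=\bm{x}(\tau)$ for every $\tau\in[t_0,t]$ immediately gives $\bm{f}(\bm{\xi}(\tau))=\bm{f}(\bm{x}(\tau))$ and $\bm{B}(\bm{\xi}(\tau))=\bm{B}(\bm{x}(\tau))$, because $\bm{f}$ and $\bm{B}$ depend only on the state. Subtracting (\ref{actual_trajectory}) from (\ref{virtual_trajectory}) therefore collapses to
\begin{equation}
\int_{t_0}^{t}\bm{B}(\bm{x}(\tau))\bigl(\bm{H}(\tau,\bm{x}(\tau))-\bm{h}(\tau,\bm{x}(\tau))\bigr)\,d\tau = \bm{0},
\end{equation}
and this equality must hold for every $t$ in the given interval. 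Applying the Lipschitz continuity of $\bm{f},\bm{B},\bm{h}$ (so that the integrand is continuous) together with the fundamental theorem of calculus, I can differentiate in $t$ to obtain the pointwise identity $\bm{B}(\bm{x}(t))\bigl(\bm{H}(t,\bm{x}(t))-\bm{h}(t,\bm{x}(t))\bigr)=\bm{0}$ for all $t\in[t_0,t]$.

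The last step is to cancel $\bm{B}(\bm{x}(t))$. Under the standard control-affine assumption that $\bm{B}(\bm{x})$ has full column rank on $\mathcal{X}$ (which is already implicit in the CCM formulation, since $\bm{B}_{\bot}$ in (\ref{c1})--(\ref{c2}) is constructed as an annihilator of a rank-$m$ map), the matrix $\bm{B}(\bm{x}(t))$ is left-invertible, so the identity above yields $\bm{H}(t,\bm{x}(t))=\bm{h}(t,\bm{x}(t))$ pointwise. This gives the claimed equivalence: the virtual dynamics (\ref{eq_virtual_system}) reduces to the true perturbed dynamics (\ref{eq_system}), and $\bm{H}(t,\bm{\xi})$ is a valid observation of the unknown disturbance $\bm{h}(t,\bm{x})$ along the trajectory.

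The main obstacle I expect is precisely this cancellation step: if $\bm{B}(\bm{x})$ were rank-deficient at some $\bm{x}$, one could only recover $\bm{H}=\bm{h}$ modulo the kernel of $\bm{B}(\bm{x})$, i.e., as equivalence in the matched channel $\bm{B}\bm{H}=\bm{B}\bm{h}$ rather than in the disturbance itself. Since a disturbance in $\ker\bm{B}(\bm{x})$ would produce no effect on the state and is therefore unobservable from state measurements, this is the strongest conclusion one can hope for; stating the theorem as ``true observation'' is justified whenever $\bm{B}(\bm{x})$ is full column rank, and I would note this explicitly as the non-degeneracy hypothesis that makes the conclusion sharp.
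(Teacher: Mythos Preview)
Your argument is correct and in fact tidier than the paper's. Two points of contrast are worth noting.

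First, the paper's proof silently drops the $\bm{B}(\bm{x})$ factor when subtracting the two integral representations, writing directly $\int_{t_0}^{t}\bm{p}(\tau,\bm{x})\,d\tau=0$ with $\bm{p}=\bm{H}-\bm{h}$. You correctly retain $\bm{B}$ and then invoke its full column rank to cancel it at the end. Your version therefore surfaces the non-degeneracy hypothesis that the paper leaves implicit, and your remark about what happens when $\bm{B}$ is rank-deficient (recovery only modulo $\ker\bm{B}$) is exactly the right caveat.

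Second, to pass from ``the integral vanishes on every $[t_0,t]$'' to ``the integrand vanishes pointwise,'' the paper runs a contradiction argument: assume $\bm{p}(\tau_1,\bm{x}_1)>0$, use Lipschitz continuity to find a neighborhood on which $\bm{p}>0$, then define a particular $\bm{x}(\tau)$ on that neighborhood to force a strictly positive integral. You instead differentiate in $t$ via the fundamental theorem of calculus, which is both shorter and avoids the awkward step in the paper's argument where the actual trajectory $\bm{x}(\tau)$ is treated as if it were freely choosable rather than fixed by the hypothesis. Your route is the more direct and rigorous one; the conclusion is the same.
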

\begin{proof}
	When the virtual trajectory equals to the actual trajectory with the same initial condition, one can derive 
	\begin{equation}
		\int_{t_0}^{t}\bm{p}(\tau,\bm{x})d \tau=0
	\end{equation}
	holds for any $[t_0,t]$, and $\bm{p}(\tau,\bm{x}) = \bm{H}(\tau,\bm{x})-\bm{h}(\tau,\bm{x})$. Assume that there exist ${\tau}_1$ and $\bm{x}_1$ such that $\bm{p}({\tau}_1,\bm{x}_1)>0$. And since $\bm{p}(\tau,\bm{x})$ is Lipschitz continuous, there exists a neighborhood $U = [{\tau}_1-\delta,{\tau}_1+\delta]\times [\bm{x}_1-\epsilon,\bm{x}_1+\epsilon]$ such that $\bm{p}(\tau,\bm{x})>0$ holds. Define $\bm{x}(\tau)=\bm{x}_0$ when $\tau\in[\tau_1-\delta,\tau_1+\delta]$, and $\bm{x}=0$ when $\tau \notin [{\tau}_1-\delta,{\tau}_1+\delta]$. One has $\int_{{\tau}_1-\delta}^{{\tau}_1+\delta}\bm{p}(\tau,\bm{x})d\tau >0$, which contradicts the hypothesis. Therefore, $\bm{p}({\tau},\bm{x})=0$ holds for ${\tau}>0$. This completes the proof. 
\end{proof}

Therefore, one can obtain the estimation of unknown external disturbances by solving the following optimization problem for any $[t_0,t]$
\begin{equation}
	\begin{aligned}
		\mathop{\min\limits_{\bm{H}(\tau,\bm{\xi})}} \ \  & \Vert \bm{\xi}(\tau) - \bm{x}(\tau) \Vert \\
		\mbox{s.t.}\quad  & \bm{\xi}(t_0) = \bm{x}(t_0),\ t_0 \le \tau \le t 
	\end{aligned} \label{opt_prob}
\end{equation}
However, since the external disturbances $\bm{h}(t,\bm{x})$ are unknown, with only the initial state $\bm{x}_0$ and control law $\bm{u}$ provided, the actual trajectory $\bm{x}(t)$ cannot be computationally determined by (\ref{actual_trajectory}). In addition, the optimization problem (\ref{opt_prob}) is infinite dimensional. To mitigate these issues, the online historical data is utilized to replace the actual trajectory $\bm{x}(t)$, and the optimization problem is solved in a moving horizon manner. 

\begin{figure}[!t]
	\centering
	\includegraphics[width=18pc]{./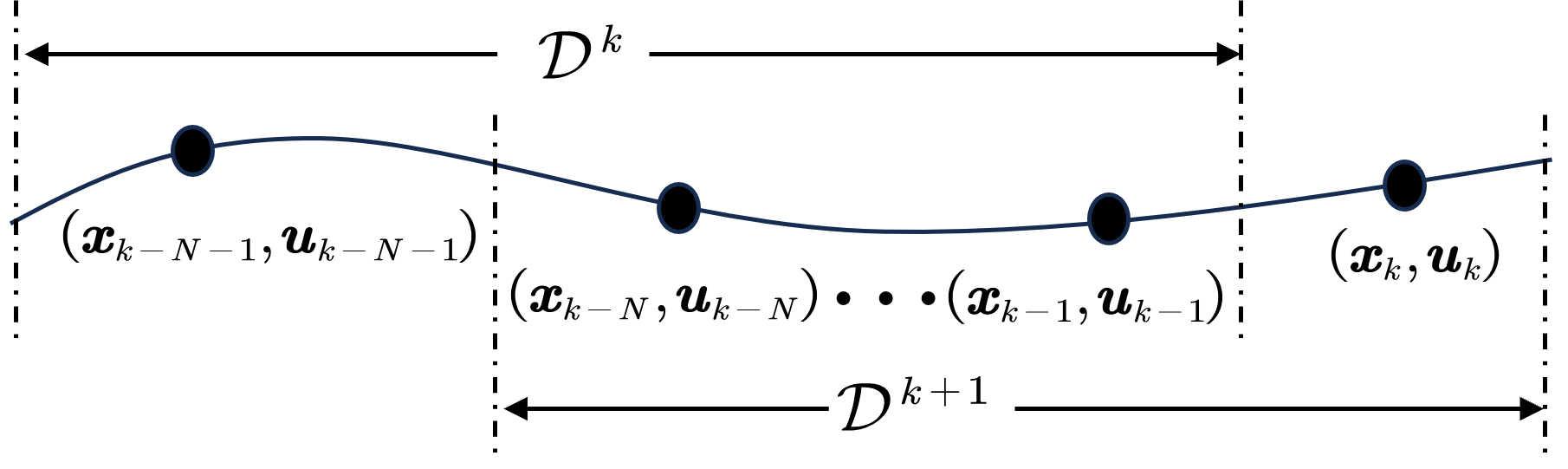} 
	\caption{The schematic of construction of memory buffer.} \label{memory_buffer_construction}
\end{figure}

Consider a memory buffer at time instant $k$
\begin{equation}
    \mathcal{D}^k=\{(\bar{\bm{x}}_i^k, \bar{\bm{u}}_i^k)\}_{i=1}^{N} \label{memory_buffer}
\end{equation}
where $\bar{\bm{x}}_i^k$, $\bar{\bm{u}}_i^k$ denote the historical state and control input of perturbed system (\ref{eq_system}) at time instant $k-1-N+i$, respectively. The memory buffer has a capacity of $N$ tuples. The construction of memory buffer is illustrated in Fig. \ref{memory_buffer_construction}. And a multilayer perceptron network is adopted to capture the external disturbances. The neural network can be expressed by
\begin{equation}
	\bm H(\bm{y};\theta_{\bm{H}}) = \bm{W}_2\phi (\bm{W}_1\phi (\bm{W}_0\bm{y}+\bm{b}_0) +\bm{b}_1) + \bm{b}_2 \label{nn_disturbance}
\end{equation}
where $\bm{W}_0\in \mathbb{R}^{{(n+1)}\times h}$, $\bm{w}_1\in \mathbb{R}^{h\times h}$, $\bm{W}_2\in \mathbb{R}^{h\times m}$ are the weight matrices, $\bm{b}_0\in \mathbb{R}^{h}$, $\bm{b}_1\in \mathbb{R}^{h}$, $\bm{b}_2\in \mathbb{R}^{m}$ are the bias vectors, $\phi$ is the activation function, and $h$ is the number of neurons in the hidden layers, $\theta_{\bm{H}}$ denotes the parameters to be optimized, $\bm{y}\triangleq[t,\bm{x}]$ represents the input of the neural network.

Similar to (\ref{memory_buffer}), a virtual memory buffer is also constructed for the virtual dynamics (\ref{eq_virtual_system}) at time instant $k$ by (\ref{virtual_trajectory})
\begin{equation}
    \mathcal{D}^k_{\text{virtual}}=\{(\bar{\bm{\xi}}_i^k, \bar{\bm{u}}_i^k)\}_{i=1}^{N} \label{memory_buffer_virtual}
\end{equation}
where $\bar{\bm{\xi}}_i^k$ and $\bar{\bm{u}}_i^k$ the historical state and control input of virtual dynamics (\ref{eq_virtual_system}) at time instant $k-1-N+i$, respectively. Now, at time instant $k$, the optimization problem (\ref{opt_prob}) can reformulated as
\begin{equation}
    \min_{\theta_{\bm{H}}} = \Vert \bar{\bm{\xi}}_k^N - \bar{\bm{x}}_k^N \Vert \label{opt_prob_reduced}
\end{equation} 
where $\bar{\bm{\xi}}_k^N$ is the virtual trajectory constructed by $\mathcal{D}^k_{\text{virtual}}$ at time instant $k$, and $\bar{\bm{x}}_k^N$ is the actual trajectory constructed by $\mathcal{D}^k$ at time instant $k$. At time instant $k+1$, the memory buffer (\ref{memory_buffer}) and (\ref{memory_buffer_virtual}) are reconstructed with new measured data, and the parameters $\theta_{\bm{H}}$ are updated by solving the optimization problem (\ref{opt_prob_reduced}) with the new memory buffer. 

With the estimated external disturbances $\bm{H}(t,\bm{x};\theta_{\bm{H}})$, the control law $\bm{u}$ can be redesigned as
\begin{equation}
    \bm{u}_{\text{ol}} =  \bm{u}_{\text{ccm}}- \bm{H}(t,\bm{x};\theta_{\bm{H}}) \label{control_law_ol}
\end{equation}
where $\bm{u}_{\text{ccm}}$ is the control law designed by the neural network-driven CCM synthesis. The proposed enhanced robust tracking control scheme is summarized in Algorithm \ref{algorithm_ol}. The tracking performance of the perturbed closed-loop system under the control law (\ref{control_law_ol}) is analyzed as follows.
\begin{theorem} \label{theorem_rci_refined}
    For the perturbed system (\ref{eq_system}), if there is a uniformly bounded metric $\bm M(\bm{x})$, i.e., $\underline{\alpha}\bm{I}\preceq \bm M(\bm{x})\preceq \overline{\alpha}\bm{I} $ and a feedback controller $\bm{u}_{\text{ccm}}=\bm k(\bm{x},\bm{x}^*)+ \bm{u}^*$ satisfying the condition (\ref{condition_ccm}). Then, under the control law (\ref{control_law_ol}), the distance between actual trajectory of perturbed closed-loop dynamics and nominal trajectory of unperturbed dynamics is bounded. And the RCI set $\Omega_{\text{ol}}(\bm{x}^*):= \left\{ \bm{x}\in \mathcal{X}:\left\| \bm{x}- \bm{x}^* \right\| _{\underline{M}}^2\le \overline{c}^2 \right\}$ is an ellipsoid RCI set of closed-loop system stabilized with $\bm{u}_{\text{ol}} =  \bm{u}_{\text{ccm}}- \bm{H}(t,\bm{x};\theta_{\bm{H}})$ , where $\overline{c}=\sup_{ \bm{x}\in\mathcal{X}}\overline{\sigma}(\Theta(\bm{x}))\overline{e}_H/\lambda, \bm M(\bm x)=\Theta(\bm{x})^T\Theta(\bm{x})$, $\bm M(\bm x)\ge \underline{M}$ holds for all $\bm x\in \mathcal{X}$, and $\overline{e}_H$ is the upper bound of external disturbances estimation error.
\end{theorem}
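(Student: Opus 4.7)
The plan is to show that closing the loop with $\bm{u}_{\text{ol}}=\bm{u}_{\text{ccm}}-\bm{H}(t,\bm{x};\theta_{\bm{H}})$ turns the perturbed dynamics (\ref{eq_system}) into a system of exactly the same structural form as the one analysed by Lemma \ref{lemma_rci}, but driven by the \emph{estimation residual} rather than by the raw disturbance. Once this reformulation is in place, the RCI bound drops out by invoking Lemma \ref{lemma_rci} verbatim, with $\overline{H}$ replaced by $\overline{e}_H$.

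Concretely, I would first substitute $\bm{u}_{\text{ol}}$ into (\ref{eq_system}) and group the disturbance terms, obtaining
\begin{equation*}
\dot{\bm{x}} = \bm{f}(\bm{x}) + \bm{B}(\bm{x})\bigl(\bm{u}_{\text{ccm}} + \bm{e}_H(t,\bm{x})\bigr), \qquad \bm{e}_H(t,\bm{x}) := \bm{h}(t,\bm{x}) - \bm{H}(t,\bm{x};\theta_{\bm{H}}).
\end{equation*}
Since $\bm{u}_{\text{ccm}}=\bm{k}(\bm{x},\bm{x}^*)+\bm{u}^*$ satisfies the CCM condition (\ref{condition_ccm}) with the same metric $\bm{M}(\bm{x})$ assumed in the hypothesis, the hypotheses of Lemma \ref{lemma_rci} are met for this reformulated closed loop, with the role of the original disturbance $\bm{h}$ now played by $\bm{e}_H$. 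By assumption, $\|\bm{e}_H(t,\bm{x})\|\le \overline{e}_H$ for all $(t,\bm{x})$, so the worst-case disturbance amplitude entering the Riemannian-energy inequality is $\overline{e}_H$ instead of $\overline{H}$.

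The remainder of the argument mirrors the derivation of Lemma \ref{lemma_rci}: differentiate the Riemannian energy $\bm{V}$ along geodesics between $\bm{x}(t)$ and $\bm{x}^*(t)$, exploit contraction at rate $\lambda$ to bound the nominal part by $-2\lambda\bm{V}$, and dominate the disturbance-driven part using Cauchy-Schwarz in the $\bm{M}(\bm{x})$-metric together with the bound $\overline{\sigma}(\Theta(\bm{x}))\,\overline{e}_H$. Integrating and taking limits then yields the ultimate bound $\|\bm{x}-\bm{x}^*\|_{\underline{M}}^2\le \overline{c}^2$ with $\overline{c}$ as stated, and invariance of the set $\Omega_{\text{ol}}(\bm{x}^*)$ follows because $\dot{\bm{V}}<0$ on its boundary.

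The main technical obstacle, and essentially the only nontrivial point beyond bookkeeping, is substantiating the assumed uniform bound $\overline{e}_H$ on the residual $\bm{h}-\bm{H}(\cdot;\theta_{\bm{H}})$. The online learning loop only guarantees a small fit on the moving window encoded in $\mathcal{D}^k$, so a rigorous derivation of $\overline{e}_H$ would require (i) a universal approximation / expressivity argument for the network (\ref{nn_disturbance}), (ii) a persistency-of-excitation-type property of the historical data, and (iii) Lipschitz continuity of $\bm{h}$ and $\bm{H}$ together with compactness of $\mathcal{X}$ to transfer a pointwise error on past samples into a uniform bound on $\mathcal{X}$. In the present proof sketch I would treat $\overline{e}_H$ as a standing assumption and note that, whenever the online scheme is effective, $\overline{e}_H\ll \overline{\sigma}(\bm{B})\,\overline{H}$, which is exactly what refines the tube relative to Lemma \ref{lemma_rci} and justifies the "enhanced" qualifier in the theorem.
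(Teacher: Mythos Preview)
Your approach is exactly that of the paper: substitute $\bm{u}_{\text{ol}}$ into (\ref{eq_system}), identify the residual $\bm{e}_H=\bm{h}-\bm{H}$ as the effective disturbance acting on the $\bm{u}_{\text{ccm}}$-controlled closed loop, and invoke Lemma \ref{lemma_rci} with $\overline{H}$ replaced by $\overline{e}_H$. The paper's own proof consists of precisely these two lines; your additional sketch of the Riemannian-energy argument behind Lemma \ref{lemma_rci} and your caveat about justifying a uniform bound $\overline{e}_H$ go beyond what the paper provides, but the core reduction is identical.
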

\begin{proof}
    The perturbed closed-loop system under the control law (\ref{control_law_ol}) can be formulated as
    \begin{equation}
        \dot{\bm{x}} = \bm{f}(\bm{x}) + \bm{B}(\bm{x})(\bm{u}_{\text{ccm}}+\bm{e}_H) \label{eq_eh_dynamics}
    \end{equation}
    where $\bm{e}_H = \bm{h}(t,\bm{x}) - \bm{H}(t,\bm{x};\theta_{\bm{H}})$ represents the estimation error of unknown external disturbances. Therefore, the RCI set can be derived by Lemma \ref{lemma_rci}. The proof is completed.
\end{proof}

Now, the tracking performance of perturbed closed-loop system under the control law (\ref{control_law_ol}) is related to the estimation error of external disturbances rather than the upper bounds of them. In addition, by compensating the estimation of unknown external disturbances into the feedback controller $\bm{u}_{\text{ccm}}$, the tube size of RCI set can be refined as shown in Theorem \ref{theorem_rci_refined}. The tightened state and control constraints are refined as
\begin{equation}
	\begin{aligned}
		&\bm x^*\in \overline{\mathcal{X}}_{\text{ol}}=\mathcal{X}\ominus\Omega_{\text{ol}}, \bm u^*\in \overline{\mathcal{U}}_{\text{ol}} =\{\bm{\overline{u}}\in\mathcal{U}|\overline{\bm u}+\bm k(\bm x,\bm x^*)\in\mathcal{U},\\ &\bm x\in \Omega_{\text{ol}}(\bm x^*),\forall \bm x \in \mathcal{X},\forall \bm x^* \in \mathcal{\overline{X}}_{\text{ol}}\}
	\end{aligned}
\end{equation}
With the refined state and control constraints, one can re-generate the reference trajectory $(\bm{x}^*,\bm{u}^*)$ online for a motion planning with mitigated conservatism. 


\begin{figure}[!t]
	\renewcommand{\algorithmicensure}{\textbf{Tracking Control}}
	\begin{algorithm}[H] 
		\caption{Enhanced Robust Tracking Control vis Online Learning}
		\label{algorithm_ol} 
		\begin{algorithmic}[1]
			\STATE Input: Tracking controller $\bm{u}_{\text{ccm}}$, nominal reference trajectory $(\bm x^*,\bm u^*)$.
			memory buffer $\mathcal{D}^k$ and $\mathcal{D}^k_{\text{virtual}}$.
			\STATE Initialization: $k=0$.
			\WHILE {The tracking task is not completed}
			\STATE Get the new measurement at time instant $k$.
			\IF{The memory buffers $\mathcal{D}^k$ and $\mathcal{D}^k_{\text{virtual}}$ are not full.}
			\STATE Track the nominal trajectory $(\bm x^*(t),\bm u^*(t))$ with the controller $\bm u_{\text{ccm}}$.
			\ELSE 
			\STATE Update the memory buffers $\mathcal{D}^k$ and $\mathcal{D}^k_{\text{virtual}}$ with new measurements.
			\STATE Update the parameters $\theta_{\bm{H}}$ by solving the optimization problem (\ref{opt_prob_reduced}).
			\STATE Track the nominal trajectory with the controller $\bm{u}_{\text{ol}} =  \bm{u}_{\text{ccm}}- \bm{H}(k,\bm{x};\theta_{\bm{H}})$.
			\ENDIF
			\STATE $k=k+1$.
			\ENDWHILE
		\end{algorithmic} 
	\end{algorithm} 
\end{figure}

\section{Numerical Simulations}
In this section, two case studies are performed to validate the effectiveness of the proposed online learning scheme. The metric $\bm{M}(\bm{x};\theta_{\bm{M}})$ is formulated as $\bm{M}(\bm{x};\theta_{\bm{M}})=\underline{\alpha}\bm{I}+\bm{m}(\bm{x};\theta_{\bm{M}})^{\mathrm{T}}\bm{m}(\bm{x};\theta_{\bm{M}})$ where $\bm{m}(\bm{x};\theta_{\bm{M}})$ is structured as a two-layer neural network architecture, with each hidden layer containing 128 neurons. The feedback controller is designed by $\bm{u}(\bm x,\bm x^*,\bm u^*;\theta_{\bm{u}})=\bm{u}^*+w_2\cdot \text{tanh}(w_1\cdot(\bm x-\bm x^*))$, which satisfies $\bm{u}=\bm{u}*$ when $\bm{x}=\bm{x}^*$. And $w_1$ and $w_2$ are parameterized by two two-layer neural networks, with each hidden layer containing 128 neurons. The number of neurons in the hidden layers of $\bm{H}(t,\bm{x};\theta_{\bm{H}})$ is 128. The neural networks are trained with the Adam optimizer. For searching robust tracking controller, the training set comprises 130K sample points uniformly distributed across the data space $\mathcal{D}$. The hyber-parameters are set as $\underline{w}=0.1$, $\overline{w}=10$, and $\lambda=0.5$.

\subsection{Tethered Space Robot}
The tethered space robot (TSR) has extensive applications in space missions such as space debris removal and deep space exploration. The perturbed dynamics of TSR is given by
\begin{equation}
	\dot{\bm{x}} = \bm{f}(\bm{x})+\bm{B}(\bm{u}+\bm{h}(t,\bm{x}))
\end{equation}
where $z_1=\vartheta$ is the in-plane angle, $z_2=l-1$ represents the dimensionless tether length, $z_3=\dot{\vartheta},\ z_4=\dot{l}$, $\bm{x}=[z_1,z_2,z_3,z_4]^\mathrm{T},\ \bm{u}=[u_1,u_2]^\mathrm{T}$, with
\begin{gather} 
	\bm{f}(\bm{x})=
	\left[ \begin{array}{c}
		z_3\\
		z_4\\
		-2\frac{z_4}{z_2+1}(z_3+1)-3\sin z_1\cos z_1\\
		\left( z_2+1 \right) \left[ \left( z_3+1 \right) ^2+3\cos ^2z_1-1 \right]\\
	\end{array} \right] \notag
	\\
	\ \ \ \ \ \ \bm{B}=\begin{bmatrix}
		0 & 0 & 1 & 0 \\
		0 & 0 & 0 & 1
	\end{bmatrix}^\mathrm{T} \notag
	\\
	\ \ \ \ \ \ \bm{h}(t,\bm{x})=\left[ \begin{array}{c}
		0.3(\cos(t)+\sin(z_2)) \\
		0.2\cos(z_1)	
		\end{array}  \right] \label{dynamics_dimensionless}
\end{gather}
where $\bm{x}\in\mathcal{X}:=\{\bm{x}\in \mathbb{R}^n| -1<z_2\le0\}$, $\bm{u}\in \mathcal{U}:=\{\bm{u}\in\mathbb{R}^m|u_2\le0\}$, $n=4$, $m=2$. The objective is to deploy the TSR from an initial position to the final position while keeping avoid colliding with space debris. The reference trajectory is generated from the motion planner proposed in \cite{Jin2024d} for avoiding space debris. The size of memory buffer is set to 20, i.e., $N=20$. The parameters $\theta_{\bm{H}}$ of $\bm{H}(t,\bm{x};\theta_{\bm{H}})$ is updated for 2 epochs per control interval. Note that the external disturbances $\bm{h}(t,\bm{x})$ is unknown for pipeline of our proposed framework. 

\begin{figure}[!t]
	\centering
	\includegraphics[width=20.5pc]{./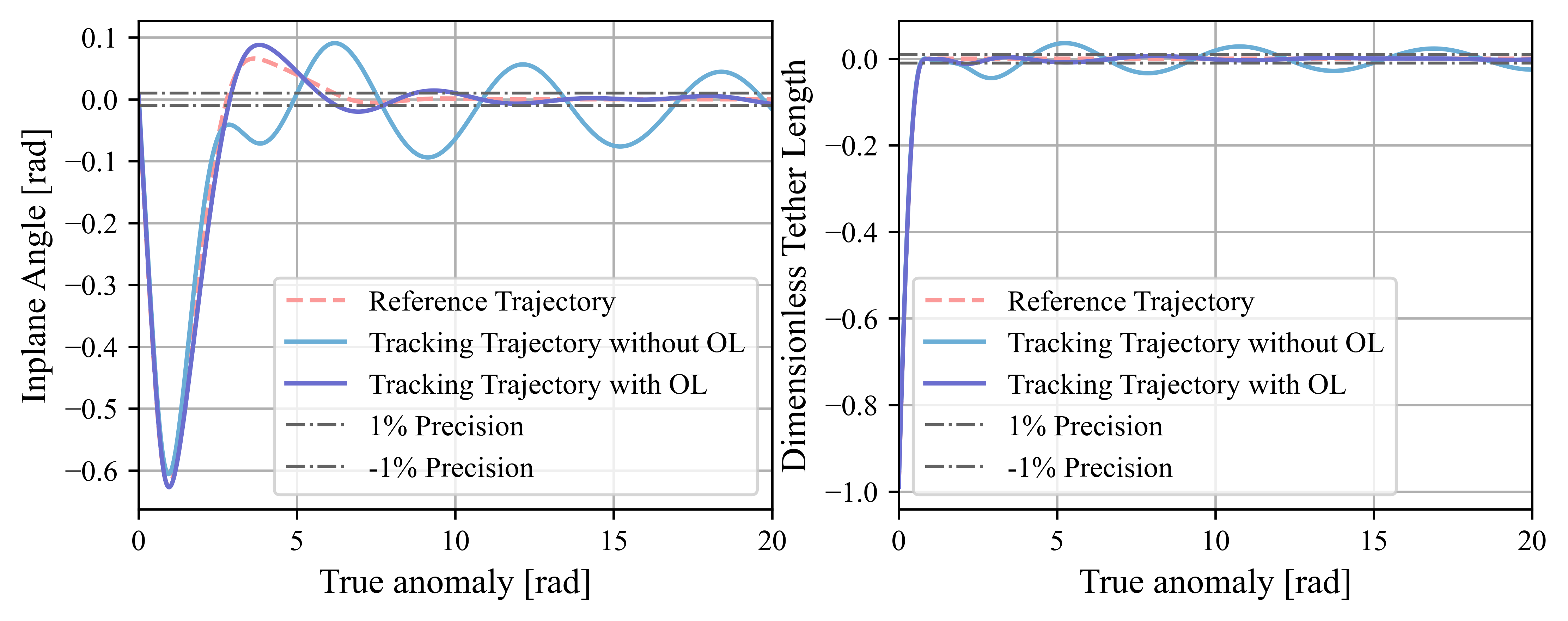} 
	\caption{The tracking performance of proposed framework for TSR. The deployment of TSR is generally done when the inplane angle and tether length reach the $1\%$ precision zone.} \label{track_TSR}
\end{figure}

The results of tracking performance are presented in Fig. \ref{track_TSR}. The tracking controller learned from neural network-driven CCM synthesis is capable of tracking reference trajectory with guaranteed performance. However, as the existence of external disturbances, the learned tracking controller $\bm{u}_{\text{ccm}}$ can not deploy TSR to the $1\%$ precision zone. The RMSE of tracking error with online learnig is $0.01051$, while the RMSE of tracking error without online learning is $0.04431$. The proposed online learning scheme improves the tracking performance by $76.3\%$, and successfully deploys the TSR to the $1\%$ precision zone. This indicates that the proposed online learning scheme can effectively capture the unknown external disturbances from online data. In addition, the computation time\footnote{The time is measured in an embedded computer with an Intel Core i5-1340p CPU.} of the proposed online learning is also investigated, which is summarized in Table \ref{computation_time}. The computation time demonstrates a nearly linear scaling with respect to the capacity of memory buffer $T$, indicating that the proposed algorithm can be efficiently solved even for large-scale problem instants. 

\begin{table}[!t]
	\renewcommand{\arraystretch}{1.3}
	\caption{Computation time of solving optimization problem (\ref{opt_prob_reduced}) with different capacity of memory buffer.}
	\centering
	\label{computation_time}
	\resizebox{8cm}{!}{
		\begin{threeparttable}
			\begin{tabular}{cccc}
				\hline
				\multicolumn{1}{c}{Capacity $T$} & \multicolumn{1}{c}{10} & \multicolumn{1}{c}{20} & \multicolumn{1}{c}{40}  \\ 
				\hline
				Time [ms] &3.8582$\pm$0.2318 & 8.5008 $\pm$ 0.6908 & 17.9657$\pm$1.5131 \\
				\hline
			\end{tabular}
		\end{threeparttable}
	}
\end{table}

\subsection{PVTOL}
The PVTOL models a planar vertical take-off and landing system for drones, which is a benchmark problem in control community. The dynamics of PVTOL is adopted from \cite{Singh2023}. The reference trajectory is randomly generated under the consideration of dynamic constraints. The size of memory buffer is set to 10, i.e., $N=10$. The parameters $\theta_{\bm{H}}$ of $\bm{H}(t,\bm{x};\theta_{\bm{H}})$ is updated for 2 epochs per control interval. The external disturbances are given by 
\begin{equation}
	\bm{h}(t,\bm{x})=\left[ \begin{array}{c}
		4(\cos(t)+\sin(p_z)+\cos(v_x)) \\
		2(\sin(p_x)+\cos(v_z))	
		\end{array}  \right]
\end{equation}

The tracking performance of position is presented in Fig. \ref{track_PVTOL}. To demonstrate the effectiveness of the proposed scheme, the start point is placed at a distance from the reference trajectory. Due to the robust properties of CCM synthesis, the tracking trajectory with controller $\bm{u}_{\text{ccm}}$ remains within the tube (\ref{RCI_set}). However, the trajectory has reached the boundary of the tube, which indicates that the tracking performance under the controller $\bm{u}_{\text{ccm}}$ is limited. The tracking trajectory with controller $\bm{u}_{\text{ol}}$ is closer to the reference trajectory, and can provide a reduced conservative motion planning for tracking as shown in Theorem \ref{theorem_rci_refined}. Moreover, with the online learning scheme, the vehicle successfully reaches the end point, whereas the vehicle under the controller $\bm{u}_{\text{ccm}}$ fails to reach the end point. The RMSE of position tracking error with online learning is $0.01772$, while while without online learning, it is $0.06448$. The proposed online learning scheme improves the tracking performance by $72.5\%$.

\begin{figure}[!t]
	\centering
	\includegraphics[width=20.5pc]{./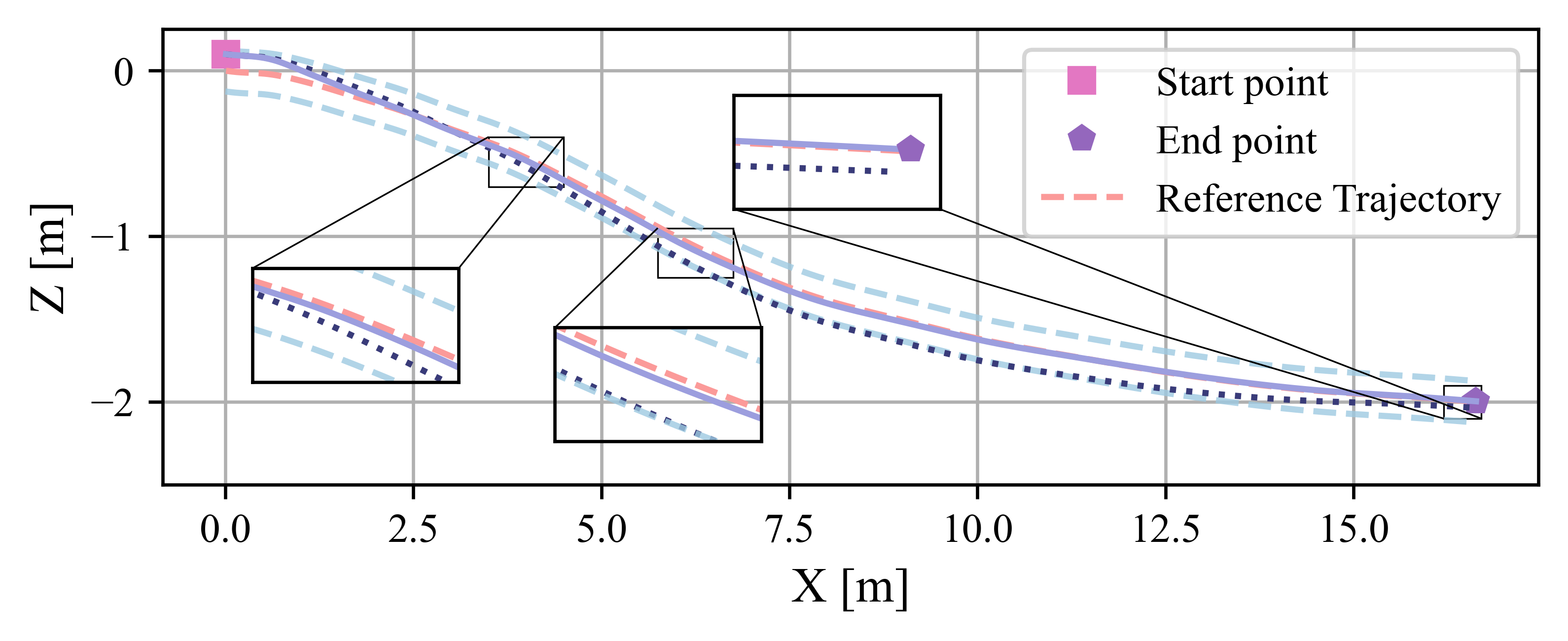} 
	\caption{The tracking performance of proposed framework for PVTOL. The black dotted line denotes the trajectory without online learning ($\bm{u}=\bm{u}_{\text{ccm}}$), the solid purple line represents the trajectory with online learning ($\bm{u}=\bm{u}_{\text{ol}}$). The light blue dashed lines represent the boundary of tube computed by (\ref{RCI_set}).} \label{track_PVTOL}
\end{figure}

In all, the proposed online learning scheme for capturing the unknown external disturbances brings two advantages over the traditional CCM synthesis. First, the tracking performance of the perturbed closed-loop system is enhanced by the proposed controller. Second, the motion planner can provide a reduced conservative motion planning for tracking.

\section{Conclusion}
This work presents an online learning scheme for enhancing the robust tracking performance of CCM synthesis. The proposed framework learns the unknown external disturbances from online historical data, enabling real-time compensation without prior knowledge. The tracking performance of the perturbed closed-loop system is improved, and the tightened state and control constraints can be refined for a reduced conservative motion planning. The proposed scheme is validated on two benchmark problems, and the results demonstrate the effectiveness of the proposed scheme. Future work will focus on the extension of the proposed scheme to the multi-agent systems and the application to the real-world scenarios.

\bibliographystyle{IEEEtran}
\bibliography{bibtex}

\end{document}